\title{Automata Techniques for Epistemic Protocol Synthesis}
\author{Guillaume Aucher
\institute{Universit\'e de Rennes 1 - INRIA\\ Rennes, France}
\email{guillaume.aucher@irisa.fr}
\and
Bastien Maubert
\institute{Universit\'e de Rennes 1\\
Rennes, France}
\email{bastien.maubert@irisa.fr}
\and
Sophie Pinchinat
\institute{Universit\'e de Rennes 1\\
Rennes, France}
\email{sophie.pinchinat@irisa.fr}
}
\begin{document}
\maketitle

\begin{abstract}
  In this work we aim at applying automata techniques to problems
  studied in Dynamic Epistemic Logic, such as epistemic planning. To
  do so, we first remark that repeatedly executing \emph{ad infinitum}
  a propositional event model from an initial epistemic model yields a
  relational structure that can be finitely represented with
  automata. This correspondence, together with recent results on
  \emph{uniform strategies}, allows us to give an alternative
  decidability proof of the epistemic planning problem for
  propositional events, with as
  by-products accurate upper-bounds on its time complexity, and the
  possibility to synthesize a finite word automaton that describes the set of
  all solution plans. In
  fact, using automata techniques enables us to solve a much more general problem, that we
  introduce and call \emph{epistemic protocol synthesis}.
\end{abstract}

\section{Introduction}

Automated planning, as defined and studied in
\cite{DBLP:books/daglib/0014222}, consists in computing a finite
sequence of actions that takes some given system from its initial
state to one of its designated ``goal'' states. The Dynamic Epistemic
Logic (DEL) community has recently investigated a particular case of
automated planning, called \emph{epistemic planning}
\cite{DBLP:journals/jancl/BolanderA11,DBLP:conf/lori/LowePW11,DBLP:conf/ijcai/AucherB13}.
In DEL, epistemic models and event models can describe accurately
 how agents perceive the occurrence of events, and how their
knowledge or beliefs evolve. Given initial epistemic
states of the agents, a finite set of available events, and an
epistemic objective, the epistemic planning problem consists in
computing (if any) a finite sequence of available events whose 
occurrence results in a situation satisfying the objective property. While this problem
is undecidable in general
\cite{DBLP:journals/jancl/BolanderA11,DBLP:conf/ijcai/AucherB13},
restricting to \emph{propositional events} (those whose pre and
postconditions are propositional) yields decidability \cite{Yu2013}.

In this paper, preliminary to our main results we bring a new piece to
the merging of various frameworks for knowledge and time. Some
connections between DEL and Epistemic Temporal Logics (ETL) are
already known
\cite{DBLP:journals/synthese/HoshiY09,van2009merging,aucher2011exploring,DBLP:conf/ijcai/WangA13}.
We establish that structures generated by iterated execution of an event model from an epistemic model are regular structures, \ie
they can be finitely represented with
automata, in case the event model is propositional. 
This allows us to  reduce the epistemic planning problem for
propositional events to the \emph{uniform strategy problem}, as studied
in \cite{maubertIGTR2013,maubertSR2013,maubertFSTTCS2013court}. The automata techniques developed for uniform
strategies then provide an alternative proof of \cite{Yu2013}, with the
additional advantage of bringing accurate upper-bounds on the time
complexity of the problem, as well as an effective synthesis procedure
to generate the recognizer of all solution plans.  In fact, our
approach allows us to solve a generalized problem in DEL, that we call \emph{epistemic
  protocol synthesis problem}, and which is essentially the problem of
synthesizing a protocol from an epistemic temporal specification; its semantics  relies on the
interplay between DEL and ETL. We then make use of the connections
with regular structures and uniform strategies to solve this latter general
problem.

\section{DEL models}
\label{sec-DEL}

For this paper we fix $\agents$, a finite set of \emph{agents}, and
$\AP$ always denotes a finite set of atomic propositions (which 
is not fixed).  The epistemic language
$\langEL$ is simply the language of propositional logic extended with
``knowledge'' modalities, one for each agent. Intuitively, $\K_i\phi$
reads as ``agent $i$ knows $\phi$''. The syntax of $\langEL$ is given
by the following grammar: \[\phi::=p\mid \neg\phi \mid \phi\ou\phi
\mid \K_i\phi,\hspace{1cm} (\text{where } p\in\AP \text{ and } i\in\agents)\]

The semantics of $\langEL$ is given in terms of epistemic
models. Intuitively, a (pointed) epistemic model $(\epsmodel,\world)$
represents how the agents perceive the actual world $\world$.

\begin{definition}
\label{def-epsmodel}
An \emph{epistemic model}  is a tuple $\epsmodel=(\setworlds,\{\relworldsi\}_{i\in\agents},\valworlds)$ where
 $\setworlds$ is a  finite set of possible \emph{worlds},
 $\relworldsi\subseteq \setworlds\times \setworlds$
  is an
  \emph{accessibility relation} on $\setworlds$ for agent
  $i\in\agents$, and
 $\valworlds:\AP\rightarrow 2^\setworlds$ is a \emph{valuation function}.
\end{definition}

We write $\world\in \epsmodel$ for $\world\in \setworlds$, and we call $(\epsmodel,\world)$  a \emph{pointed epistemic model}.
Formally, given a pointed epistemic model $(\epsmodel,\world)$, we define the
semantics of $\langEL$ by induction on its formulas:
 $\epsmodel,\world\models p$ if $\world\in\valworlds(p)$,
 $\epsmodel,\world\models \neg\phi$ if it is not the case that $\epsmodel,\world\models\phi$,
 $\epsmodel,\world\models \phi\ou\psi$ if
    $\epsmodel,\world\models\phi$ or $\epsmodel,\world\models\psi$, and
 $\epsmodel,\world\models\Ki\phi$ if for all $\world'$ such
    that $\world\relworldsi\world'$, $\epsmodel,\world'\models\phi$.


\begin{definition}
\label{def-eventmodel}
An  \emph{event model} is a tuple
$\eventmodel=(\setevents,\{\releventsi\}_{i\in\agents},\pre,\post)$ where
 $\setevents$ is  finite set of \emph{events}, for each $i\in\agents$,
 $\releventsi\subseteq\setevents\times \setevents$
  is an \emph{accessibility
  relation} on $\setevents$ for agent $i$,
 $\pre: \setevents\to \langEL$ is a \emph{precondition function} and
 $\post: \setevents\to \AP \to \langEL$ is a \emph{postcondition function}.
\end{definition}

We write $\eventa\in \eventmodel$ for $\eventa\in \setevents$, and
call $(\eventmodel,\eventa)$ a \emph{pointed event model}. For an
event $e \in \eventmodel$, the precondition $\pre(e)$ and the
postconditions $\post(e)(p)$ ($p \in \AP$) are epistemic
formulas. They respectively describe  the set of worlds where event $e$ may take
place and the set of worlds where proposition $p$ will hold after
event $e$ has occurred. 

\begin{definition}
\label{def-prop-event}
A \emph{proposition event model} is an event model whose preconditions
and postconditions all lie
in the propositional fragment of $\langEL$.
\end{definition} 

We now define the \emph{update product} which, given an epistemic
model $\epsmodel$ and
an event model $\eventmodel$,  builds the epistemic model
$\epsmodel\otimes\eventmodel$ that represents the new epistemic
situation after $\eventmodel$ has occurred in $\epsmodel$. 

\begin{definition}
\label{def-update}
Let $\epsmodel=(\setworlds,\{\relworldsi\}_{i\in\agents},\valworlds)$
be an epistemic model and 
$\eventmodel=(\setevents,\{\releventsi\}_{i\in\agents},\pre,\post)$ be an event model.
The \emph{update product} of $\epsmodel$ and $\eventmodel$ is the
epistemic model $\epsmodel\otimes
\eventmodel=(\setworlds^\otimes,\{\relworldsi^\otimes\}_{i\in\agents},\valworlds^\otimes)$,
where $\setworlds^\otimes =\{(\world,\eventa)\in \setworlds\times
\setevents \mid \epsmodel,\world\models \pre(\eventa)\}$,
$\relworldsi^\otimes(\world,\eventa) =\{(\world',\eventa')\in
\setworlds^\otimes\mid \world'\in \relworlds_i(\world)\mbox{ and
}\eventa'\in \relevents_i(\eventa)\}$, and $\valworlds^\otimes(p)
=\{(\world,\eventa)\in \setworlds^\otimes\mid \epsmodel,\world\models
\post(\eventa)(p)\}$.
\end{definition}

The update product of a pointed epistemic
model $(\epsmodel,\world)$ with a pointed event model
$(\eventmodel,\eventa)$ is $(\epsmodel,\world)\otimes
(\eventmodel,\eventa)=(\epsmodel\otimes\eventmodel, (\world,\eventa))$
if $\epsmodel,\world\models\pre(\eventa)$, and it is undefined
otherwise.

To finish with this section, we define the \emph{size} of an epistemic model
$\epsmodel=(\setworlds,\{\relworldsi\}_{i\in\agents},\valworlds)$,
denoted by $|\epsmodel|$, as its number of edges:
$|\epsmodel|=\sum_{i\in\agents}|\relworldsi|$. The size of an event
model
$\eventmodel=(\setevents,\{\releventsi\}_{i\in\agents},\pre,\post)$,
that we note $|\eventmodel|$, is
its number of edges plus the sizes of precondition and postcondition
formulas:
$|\eventmodel|=\sum_{i\in\agents}|\releventsi|+\sum_{\eventa\in\setevents}(|\pre(\eventa)|
+ \sum_{p\in \AP}|\post(\eventa)(p)|)$.


\section{Trees, forests and $\CTLsKn$}
\label{sec-rlang}

 A \emph{tree alphabet} is a finite set of
\emph{directions}  $\Dirtree=\{d_1,d_2\ldots\}$. 
A \emph{$\Dirtree$-tree}, or
\emph{tree} for short when $\Dirtree$ is clear from the context, is a
set of words $\tree\subseteq \Dirtree^+$ that is closed for nonempty
prefixes, and for which
 there is a  direction $\racine=\tree\cap\Dirtree$, called the \emph{root}, such that  for all
    $\noeud\in\tree$, $\noeud=\racine\cdot\noeud'$ for some $\noeud'\in\Dirtree^*$.
A \emph{$\Dirtree$-forest}, or \emph{forest} when $\Dirtree$ is
understood, is defined likewise, except that it can have several
roots. Alternatively a forest can be seen as a union of trees.


    We classically allow nodes of trees and forests to carry
    additional information via labels: given a \emph{\labeling
      alphabet} $\Sigma$ and a tree alphabet $\Dirtree$, a
    \emph{$\Sigma$-\labeled $\Dirtree$-tree}, or
    \emph{$(\Sigma,\Dirtree)$-tree} for short, is a pair
    $\ltree=(\tree,\lab)$, where $\tree$ is a $\Dirtree$-tree and
    $\lab:\tree \rightarrow \Sigma$ is a \emph{\labeling}. 
The notion of
\emph{$(\Sigma,\Dirtree)$-forest} $\lforest=(\forest,\lab)$ is defined
likewise. Note that we use forests to represent the universe (to be defined) in
the semantics of $\CTLsKn$, hence the
notations $\lforest$ and $\forest$. 
Given a $\Dirtree$-forest $\forest$ and a node $\noeud=\dir_1\ldots\dir_n$ in the
forest $\forest$, we define the tree $\forest_\noeud$ to which this
node belongs as 
 the ``greatest'' tree in the forest $\forest$ that contains the
node $\noeud$: $\forest_\noeud=\{\noeuda\in\forest\mid
\dir_1\pref\noeuda\}$.  Similarly, given a
$(\Sigma,\Dirtree)$-forest $\lforest=(\forest,\lab)$ and a node
$\noeud\in\forest$, $\lforest_\noeud=(\forest_\noeud,\lab_\noeud)$, where
$\forest_\noeud$ is as above and $\lab_\noeud$ is the restriction of $\lab$ to the tree $\forest_\noeud$.


The set of well-formed $\CTLsKn$  formulas 
 is
 given by the following grammar:
\begin{align*}\mbox{State formulas:\hspace{1cm}}&\phi ::= p \mid \neg \phi \mid \phi\ou \phi \mid \A
  \psi \mid \Ki \phi & (\text{where } p\in\AP \text{ and } i\in\agents) \\ 
  \mbox{Path formulas:\hspace{1cm}}&\psi ::= \phi \mid \neg \psi \mid
  \psi \ou \psi \mid \X \psi \mid \psi \until \psi, \end{align*}

Let $\Dirtree$ be a finite set of directions, and let $\Sigma=2^{\AP}$
be the set of possible valuations. 
A $\CTLsKn$ (state) formula is interpreted in a node
of a $(\Sigma,\Dirtree)$-tree, but the semantics is
parameterized by, first, for each agent $i\in\agents$, a binary
relation $\reli$ between finite words over $\Sigma$, and second, a
forest of $(\Sigma,\Dirtree)$-trees which we see as the
\emph{universe}.  Preliminary to defining the semantics of  $\CTLsKn$, we let the \emph{node word}
of a node $\noeud=\dir_1\dir_2\ldots \dir_n \in \tree$ be
$\nword(\noeud) = \lab(\dir_1)\lab(\dir_1\dir_2)\ldots
\lab(\dir_1\ldots \dir_n) \in \Sigma^*$, made of the sequence of labels of all
nodes from the root to this node. Now, given a family
$\{\rel_i\}_{i\in\agents}$ of binary relations over $\Sigma^*$, a
$(\Sigma,\Dirtree)$-forest $\universe$, two nodes
$\noeud,\noeuda\in\universe$ and $i\in\agents$, we let $\noeud\reli
\noeuda$ denote that $\nword(\noeud)\reli\nword(\noeuda)$. 

A state formula of $\CTLsKn$ is interpreted over a
$(\Sigma,\Dirtree)$-tree $\ltree=(\tree,\lab)$ in a node
$\noeud\in\tree$, with an implicit universe $\universe$ and relations
$\{\reli\}_{i\in\agents}$, usually clear from the context: the notation $\ltree,\noeud\models\phi$
means that $\phi$ holds at the node $\noeud$ of the \labeled tree
$\ltree$.  Because all inductive cases but the knowledge operators
follow the classic semantics of $\CTLs$ on trees, 
we only give the semantics for  formulas of the form
$\Ki\phi$: 
\[\ltree,\noeud\models \Ki\phi   \hspace{.5cm}\mbox{ if for all } 
   \noeuda\in\universe \mbox{ such that } \noeud\rel \noeuda,\;
  \universe_\noeuda,\noeuda\models\phi \;\footnotemark[1]\]

\footnotetext[1]{Recall that
    $\universe_\noeuda$ is the biggest tree in $\universe$ that
    contains $\noeuda$
    .}

We shall use the notation $\ltree \models \phi$
 for $\ltree,\racine \models \phi$, where $\racine$ is the root of
 $\ltree$. 

Before stating the problems considered and our results, we establish in
the next section a connection between DEL-generated models and
regular structures, that allows us to apply automata techniques to
planning problems in DEL.

\section{DEL-generated models and regular structures}
\label{automatic}

We first briefly recall some basic definitions and facts concerning
finite state automata and transducers.  
A \emph{deterministic word automaton}  is a tuple
$\wA=(\Sigma,\wQ,\wdelta,\wq_\init,\wF)$, where $\Sigma$ is an \emph{alphabet}, $\wQ$ is a finite set of
\emph{states}, $\wdelta:\wQ\times\Sigma\to\wQ$ is a partial  \emph{transition function}
and $\wF$ is a set of \emph{accepting} states. 
The \emph{language} accepted by  a word automaton $\wA$
consists in the set of words accepted by $\wA$, and it is classically written $\lang(\wA)$. It is well known that
the set of languages accepted by word automata is
precisely the set of \emph{regular} word
languages. 
A \emph{finite state synchronous transducer}, or
\emph{synchronous transducer} for short, is a finite word
automaton with two tapes, that reads one letter from each tape
at each transition. Formally, a synchronous transducer is a tuple
$\trans=(\Sigma,\transQ,\transDelta,\transq_\init,\transF)$,
where the components are as for word automata, except for the
\emph{transition relation}
$\transDelta\subseteq\transQ\times\Sigma\times\Sigma\times\transQ$. The
(binary) relation recognized by a transducer $\trans$ is denoted by
$[\trans]\subseteq \Sigma^*\times\Sigma^*$. Synchronous transducers are known to recognize
the set of \emph{regular relations}, also called \emph{synchronized
  rational relations} in the literature (see \cite{elgot1965relations,berstel1979transductions,DBLP:journals/corr/abs-1304-4150}).
In the following, the size of a transducer $\trans$, written $|\trans|$, will denote the
size of its transition relation: $|\trans|=|\transDelta|$.

\begin{definition}
A \emph{relational structure} is a tuple $\str=(\strdom,\{\strreli\}_{i\in \agents},\strval)$ where $\strdom$
is the (possibly infinite) \emph{domain} of $\str$, for each $i\in\agents$,
$\reli\;\subseteq \strdom\times\strdom$ is a binary relation and
$\strval:\AP\to 2^\strdom$ is a valuation function. 
$\strval$ can alternatively be seen as a set
of predicate interpretations for atomic propositions in $\AP$. 
\end{definition}

\begin{definition}
\label{def-str}
A relational structure $\str=(\strdom,\{\strreli\}_{i\in
  \agents},\strval)$ is a \emph{regular structure} over a finite
alphabet $\Sigma$ if its domain $\strdom\subseteq \Sigma^*$ is a
regular language over $\Sigma$, for each $i$, $\reli\;\subseteq
\Sigma^*\times\Sigma^*$ is a regular relation and for each $p\in\AP$,
$\strval(p)\subseteq \strdom$ is a regular language.  Given
deterministic word automata $\wauto_\str$ and $\wauto_p$
($p\in\AP$), as well as transducers $\trans_i$ for $i\in\agents$, we
say that $(\wauto_\str,\{\trans_i\}_{i\in\agents},\{\wauto_p\}_{p\in\AP})$
is a \emph{representation} of $\str$ if $\lang(\wauto_\str)=\strdom$, for each
$i\in\agents$, $[\trans_i]=\;\reli$ and for each $p\in\AP$,
$\lang(\wauto_p)=\strval(p)$.
\end{definition} 

\begin{definition}
\label{def-DELforest}
For an epistemic model
$\epsmodel=(\setworlds,\{\relworldsi\}_{i\in\agents},\valworlds)$ and
an event model
$\eventmodel=(\setevents,\{\releventsi\}_{i\in\agents},\pre,\post)$,
we define the family of epistemic models $\{\ETLforest[n]{\epsmodel}{\eventmodel}\}_{n\geq 0}$
by letting $\ETLforest[0]{\epsmodel}{\eventmodel}=\epsmodel$ and
$\ETLforest[n+1]{\epsmodel}{\eventmodel}=\ETLforest[n]{\epsmodel}{\eventmodel}\otimes\eventmodel$. Letting,
for each $n$,
$\ETLforest[n]{\epsmodel}{\eventmodel}=(\setworlds^n,\{\relworldsi^n\}_{i\in\agents},\valworlds^n)$,
we define the relational structure generated by $\epsmodel$ and
$\eventmodel$ as $\ETLforest{\epsmodel}{\eventmodel}=
(\strdom,\{\strrel_i\}_{i\in\agents},\strval)$, where:
\begin{itemize}
  \item $\strdom=\bigunion_{n\geq 0}\setworlds^n$,
  \item $\hist\strrel_i\hist'$ if there is some $n$ such that
    $\hist,\hist'\in\ETLforest[n]{\epsmodel}{\eventmodel}$ and $\hist \relworldsi^n \hist'$, and
  \item $\strval(p)=\bigunion_{n\geq 0}\valworlds^n(p)$.
\end{itemize}
\end{definition}

\begin{proposition}
\label{prop-merging}
If $\epsmodel$ is an epistemic model and $\eventmodel$ is a propositional event
model, then $\ETLforest{\epsmodel}{\eventmodel}$ is a regular
structure, and it admits a representation of  size
$2^{O(|\AP|)}\cdot (|\epsmodel| + |\eventmodel|)^{O(1)}$.
\end{proposition}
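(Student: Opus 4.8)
The plan is to encode each world of $\ETLforest{\epsmodel}{\eventmodel}$ as a finite word and then to exhibit, by hand, automata and transducers recognizing the domain, the valuations and the accessibility relations. Unfolding Definition~\ref{def-update} $n$ times, a world of $\ETLforest[n]{\epsmodel}{\eventmodel}$ is exactly a nested tuple $(\ldots((\world,\eventa_1),\eventa_2)\ldots,\eventa_n)$, which I identify with the word $\world\,\eventa_1\cdots\eventa_n$ over the alphabet $\Sigma=\setworlds\uplus\setevents$. The domain $\strdom$ is then the set of \emph{valid} words of the form $\world\,\eventa_1\cdots\eventa_n$, namely those along which every precondition holds at the moment the corresponding event is applied; and, reading off Definition~\ref{def-DELforest}, two such words are $\strrel_i$-related precisely when they have the same length, their initial worlds are $\relworldsi$-related, and their $k$-th events are $\releventsi$-related for every $k$. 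This last condition is synchronous and length-preserving, i.e.\ exactly the shape a synchronous transducer can test.

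The crucial point — and the only place propositionality is used — is that the propositional valuation of a world can be tracked by a finite automaton. Because $\pre(\eventa)$ and $\post(\eventa)(p)$ are propositional, whether $\eventa$ is applicable and which valuation it produces depend only on the current valuation, seen as an element $v\in 2^{\AP}$: the event $\eventa$ is applicable at $v$ iff $v\models\pre(\eventa)$, and the resulting valuation $v'$ satisfies $p\in v'$ iff $v\models\post(\eventa)(p)$. This defines a deterministic update map $(v,\eventa)\mapsto v'$ on $2^{\AP}$, so the valuation reached after reading $\world\,\eventa_1\cdots\eventa_n$ is entirely determined by the valuation $\{p\in\AP\mid\world\in\valworlds(p)\}$ of $\world$ together with the event sequence, and can be computed on the fly.

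Concretely I would take $\wauto_\str$ to have state set $2^{\AP}$ plus a fresh initial state $\wq_\init$: from $\wq_\init$, reading a world $\world$ leads to the state given by the valuation of $\world$; from a state $v$, reading an event $\eventa$ with $v\models\pre(\eventa)$ leads to the updated state $v'$, and there is simply no transition when $v\not\models\pre(\eventa)$, so only valid histories survive. Declaring every state of $2^{\AP}$ accepting yields $\lang(\wauto_\str)=\strdom$. For $p\in\AP$, the automaton $\wauto_p$ is identical but accepts only at states $v$ with $p\in v$, so $\lang(\wauto_p)=\strval(p)$. For the relation, $\trans_i$ runs two synchronized copies of this bookkeeping: its states are pairs $(v,v')\in 2^{\AP}\times 2^{\AP}$; its first transition reads a pair $(\world,\world')$ with $\world\relworldsi\world'$ and moves to the corresponding pair of valuations; each later transition reads $(\eventa,\eventa')$ with $\eventa\releventsi\eventa'$, checks both preconditions, and updates both components. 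Accepting at all reachable states gives $[\trans_i]=\strrel_i$; the two-sided precondition check is essential because both endpoints of $\strrel_i$ must be valid histories. Correctness of each device follows by a routine induction on word length, using determinacy of the update map.

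For the size bound, $\wauto_\str$ and each $\wauto_p$ have $O(2^{|\AP|})$ states and $O(2^{|\AP|}\cdot|\setevents|+|\setworlds|)$ transitions, while each $\trans_i$ has $O(2^{2|\AP|})$ states and $O(2^{2|\AP|}\cdot|\releventsi|+|\relworldsi|)$ transitions; summing over the $|\AP|$ valuation automata and the $|\agents|$ transducers (and using $\sum_i|\relworldsi|=|\epsmodel|$ and $\sum_i|\releventsi|\le|\eventmodel|$) gives a representation of total size $2^{O(|\AP|)}\cdot(|\epsmodel|+|\eventmodel|)^{O(1)}$, as claimed. I expect the main difficulty to be organisational rather than deep: one must argue carefully that the finitely many valuations in $2^{\AP}$ genuinely suffice to decide applicability and to reproduce the valuation of \emph{every} world at \emph{every} level — this is precisely the content of propositionality — and that the synchronous transducer simultaneously enforces the equal-length, pairwise-relatedness shape of $\strrel_i$ and the validity of both of its arguments.
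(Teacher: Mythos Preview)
Your proposal is correct and follows essentially the same approach as the paper: the domain and valuation automata are identical to the paper's (states $2^{\AP}\uplus\{\wq_\init\}$, transitions driven by the deterministic valuation-update map, which is exactly where propositionality enters). The only cosmetic difference is in the transducer for $\strrel_i$: the paper uses a one-state transducer checking the componentwise relations and then intersects with the domain by composing on both sides with the identity transducer on $\strdom$, whereas you build the product of two copies of $\wauto_\str$ directly; the two constructions yield the same object up to reachable states and the same size bound.
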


\begin{proof}
  Let $\epsmodel=(\setworlds,\relworlds,\valworlds)$ be an epistemic
  model, let 
  $\eventmodel=(\setevents,\relevents,\pre,\post)$ be a propositional event model, and 
  let
  $\ETLforest{\epsmodel}{\eventmodel}=(\strdom,\{\strrel_i\}_{i\in\agents},\strval_\strdom)$.

  Define the word
  automaton $\wauto_\strdom=(\Sigma,\wQ,\wdelta,\wq_\init,\wF)$, where $\Sigma=\setworlds\union\setevents$,
   $\wF=\{\wq_\val\mid \val\subseteq \AP\}$ and $\wQ=\wF\uplus\{\wq_\init\}$. 
  For a world $\world\in\setworlds$,
  we define its \emph{valuation} as $\val(\world)\egdef\{p\in\AP\mid
  \world\in\valworlds(p)\}$.
  We now define $\wdelta$, which
  is the following partial transition function:
  \[
  \begin{array}{ll}
    \forall \world\in\setworlds$, $\forall\eventa\in\setevents, &\\
    \wdelta(\wq_\init,\world)=\wq_{\val(\world)} &
    \wdelta(\wq_\init,\eventa) \mbox{ is undefined,}\\
    \wdelta(\wq_\val,\world)  \mbox{ is undefined} &
    \wdelta(\wq_\val,\eventa) = 
    \begin{cases} 
      \wq_{\val'}\mbox{, with }\val'=\{p\mid \val\models\post(\eventa)(p)\}
      & \mbox{if }\val\models\pre(\eventa)\\
      \mbox{undefined} & \mbox{otherwise.}
    \end{cases}
  \end{array}
  \]
It is not hard to see that $\lang(\wauto_\strdom)=\strdom$, hence
$\strdom$ is a regular language. Also, $\wauto_\strdom$ has
$2^{|\AP|}+1$ states, and each state has at most
$|\epsmodel|+|\eventmodel|$ outgoing transitions, so that
$|\wauto_\strdom|=2^{O(|\AP|)}\cdot (|\epsmodel|+|\eventmodel|)$.


Concerning  valuations, take some $p\in\AP$. 
Let $\wauto_p=(\Sigma,\wQ,\wdelta,\wq_\init,\wFp)$, where $\wFp=\{\wq_\val\mid
p\in\val\}$. Clearly, $\lang(\wauto_p)=\strval_\strdom(p)$, hence $\strval_\strdom(p)$
is a regular language, and $|\wauto_p|=|\wauto_\strdom|$. 


For the relations, let $i\in\agents$ and consider the one-state
synchronous transducer
$\trans_i=(\Sigma,\transQ',\transDelta_i,\wq_\init,\transF')$,
where
$\transQ'=\{\transq\}$, $\wq_\init=\wq$, $\transF'=\{\transq\}$, and
$\transDelta_i=\{(\transq,\world,\world',\transq)\mid\world\relworlds_i\world'\}\cup\{(\transq,\eventa,\eventa',\transq)\mid\eventa\relevents_i\eventa'\}$.
It is easy to see that $\strrel_i\;=[\trans_i]\inter
\strdom\times\strdom$. Since $[\trans_i]$ is a regular relation and
$\strdom$ is a regular language, $\strrel_i$ is a regular relation
recognized by
$\transi'=\trans_{\strdom}\compo\transi\compo\trans_{\strdom}$,
where $\trans_{\strdom}$ is a synchronous transducer that
recognizes the identity relation over $\strdom$ (easily obtained from
$\wauto_\strdom$). 
This transducer is of size $|\transi'|=|\trans_{\strdom}|^2\cdot
|\transi|= 2^{O(|\AP|)}\cdot (|\epsmodel|+|\eventmodel|)^{O(1)}$.
Finally, $\ETLforest{\epsmodel}{\eventmodel}$ is a regular
structure that accepts
$(\wauto_\strdom,\{\transi'\}_{i\in\agents},\{\wauto_p\}_{p\in\AP})$
as a regular representation of size $2^{O(|\AP|)}\cdot (|\epsmodel| +
|\eventmodel|)^{O(1)}$. One can check that this is also an upper bound
on the time
needed to compute this representation.

\end{proof}


\section{Epistemic protocol synthesis}
\label{protocol}

We first consider the   problem of epistemic
planning
\cite{DBLP:journals/jancl/BolanderA11,DBLP:conf/lori/LowePW11} studied
in the Dynamic Epistemic Logic community. Note that our formulation
slightly differs from the classic one as we consider a unique event model,
but both problems can easily be proved inter-reducible in linear time.

\renewcommand{\subsetevents}{\setevents}

\begin{definition}[Epistemic planning problem]
\label{def-onemodelplanning}
  Given a pointed
  epistemic model $(\epsmodel_\init,\world_\init)$, an event model $\eventmodel$,
  a set of events $\subsetevents\subseteq \eventmodel$
  and a goal formula $\phi\in\langEL$,  decide if there
  exists a finite series of events
  $\eventa_1\ldots\eventa_n$ in $\subsetevents$
  such that
  $(\epsmodel_\init,\world_\init)\otimes(\eventmodel,\eventa_1)\otimes\ldots\otimes(\eventmodel,\eventa_n)\models\phi$.
  The \emph{propositional epistemic planning problem} is the
  restriction of the  epistemic planning problem to
  propositional event models.
\end{definition}


The epistemic planning problem is undecidable 
\cite{DBLP:journals/jancl/BolanderA11,DBLP:conf/ijcai/AucherB13}. However,
\cite{DBLP:journals/jancl/BolanderA11} proved that the problem is
decidable in the case of one agent and equivalence accessibility
relations in epistemic and event models. More recently,
 \cite{DBLP:conf/ijcai/AucherB13} and
 \cite{Yu2013}  proved independently that the one agent problem is
 also decidable for $\treu$ accessibility relations. \cite{Yu2013} also
proved that restricting to propositional event models yields
decidability of the epistemic planning problem, even for several
agents and arbitrary accessibility relations.


\begin{theorem}[\cite{Yu2013}]
\label{th-chinois}
The propositional epistemic planning problem is decidable. 
\end{theorem}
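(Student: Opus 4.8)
The plan is to reduce the propositional epistemic planning problem to a reachability question over the regular structure $\ETLforest{\epsmodel_\init}{\eventmodel}$ provided by Proposition~\ref{prop-merging}, and then to solve that reachability question using the automata representation. First I would observe that any finite sequence of events $\eventa_1\ldots\eventa_n$ in $\subsetevents$ that is executable from $(\epsmodel_\init,\world_\init)$ corresponds exactly to a node $\hist=\world_\init\eventa_1\ldots\eventa_n$ in the domain $\strdom$ of the generated structure, and that the resulting pointed epistemic model
\[(\epsmodel_\init,\world_\init)\otimes(\eventmodel,\eventa_1)\otimes\ldots\otimes(\eventmodel,\eventa_n)\]
is, up to isomorphism of the relevant submodel, the structure $\ETLforest[n]{\epsmodel_\init}{\eventmodel}$ pointed at $\hist$. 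Thus a solution plan exists if and only if there is a node $\hist\in\strdom$ reachable from the root $\world_\init$ whose label satisfies the goal formula $\phi$.

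The key step is then to translate satisfaction of the epistemic goal $\phi\in\langEL$ into a condition checkable on the representation $(\wauto_\strdom,\{\transi\}_{i\in\agents},\{\wauto_p\}_{p\in\AP})$. The modal operators $\Ki$ range over $\strrel_i$-successors, i.e.\ over nodes related through the regular relations recognized by the transducers $\transi$, and the atomic propositions are decided by the automata $\wauto_p$. Since regular relations and regular languages are effectively closed under Boolean combinations, projection, and composition with relations, one can compute, by induction on the structure of $\phi$, a deterministic word automaton $\wauto_\phi$ recognizing exactly the set of nodes $\hist\in\strdom$ where $\phi$ holds: the propositional cases use closure under the Boolean operations applied to the $\wauto_p$, and each modality $\Ki\psi$ is handled by complementing, composing with $\transi$, and complementing again the automaton already built for $\psi$, intersecting throughout with $\wauto_\strdom$ so as to stay within the domain. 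The planning problem then reduces to testing whether $\lang(\wauto_\phi)$ restricted to words whose event letters lie in $\subsetevents$ is nonempty, which is decidable; restricting to $\subsetevents$ is a trivial edge-pruning on $\wauto_\strdom$ that one performs at the outset.

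The main obstacle I expect is the treatment of the knowledge modalities under the epistemic semantics of $\CTLsKn$, specifically ensuring that the transducer-based preimage construction correctly captures $\Ki$. The subtlety is that the automaton for $\Ki\psi$ must accept a node $\hist$ precisely when \emph{all} its $\strrel_i$-successors satisfy $\psi$; realizing the universal quantifier requires taking the complement of $\wauto_\psi$, projecting its domain through the transducer $\transi$ (a one-sided projection of a regular relation, which remains regular), and complementing once more, with careful bookkeeping to keep everything intersected with $\strdom$ and to keep the automata deterministic for the complementation steps. Each complementation can incur an exponential blow-up, but since $\phi$ has bounded modal depth and the base automata have size $2^{O(|\AP|)}\cdot(|\epsmodel_\init|+|\eventmodel|)^{O(1)}$ by Proposition~\ref{prop-merging}, the overall construction terminates and yields an effective decision procedure, establishing decidability. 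I would remark in passing that the same construction, by returning $\wauto_\phi$ itself rather than merely a yes/no answer, already furnishes the promised recognizer of all solution plans.
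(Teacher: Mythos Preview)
Your approach is correct in outline and reaches the same conclusion, but the machinery differs from the paper's. The paper does not build $\wauto_\phi$ by induction on $\phi$ using closure of regular languages under Boolean operations and transducer preimages; instead it invokes the \emph{powerset construction} of \cite{maubertIGTR2013} from the uniform-strategies setting. That construction enriches the state space of $\wauto_\strdom$ once (iterated $k$ times for nesting depth $k$), so that each enriched state already carries the information needed to evaluate all knowledge subformulas positionally, after which one simply marks as accepting the states satisfying $\phi$. Your subformula-by-subformula automaton construction is more elementary and self-contained --- it appeals only to standard closure properties of regular languages and synchronized rational relations --- whereas the paper's route makes the link to the uniform-strategy literature explicit; both yield the same $(k{+}1)$-exponential upper bound.

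One point needs fixing: the restriction to events in $\subsetevents$ must \emph{not} be performed ``at the outset'' by pruning $\wauto_\strdom$. The semantics of $\Ki$ in the product model $\ETLforest[n]{\epsmodel_\init}{\eventmodel}$ ranges over \emph{all} worlds of that product, including those whose event components lie outside $\subsetevents$; pruning the domain before computing $\wauto_\phi$ would discard epistemic successors and alter the truth value of knowledge subformulas. The restriction to $\subsetevents$ (together with the restriction to histories starting at $\world_\init$, which you mention earlier but drop from the final nonemptiness test) should be imposed only at the very end, after $\wauto_\phi$ has been built over the full domain --- exactly as the paper does when it prunes transitions only \emph{after} the powerset construction.
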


 Proposition~\ref{prop-merging} allows us to establish an alternative
proof of this result, with two side-benefits. First, using automata
techniques,  our decision procedure can
synthesize as a by-product a finite word automaton that generates exactly the
(possibly infinite) set of all
solution plans. Second, we obtain accurate upper-bounds on the time complexity.

For an instance $(\epsmodel,\eventmodel,\setevents,\phi)$ of the
epistemic planning problem, we define its size as the sum of its
components' sizes, plus the number of atomic propositions:
$|\epsmodel,\eventmodel,\setevents,\phi|=|\epsmodel|+|\eventmodel|+|\setevents|+|\phi|+|\AP|$.

\begin{theorem}
\label{theo-synth-epistemic-planning}
The propositional epistemic planning problem is in $\kEXPTIME[k+1]$ for
formulas of nesting depth $k$. 
Moreover, it is possible to build in the same time a finite word
automaton $\wautoplans$ such that $\lang(\wautoplans)$ is the set of
all solution plans.
\end{theorem}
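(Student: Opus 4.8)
The plan is to reduce the propositional epistemic planning problem to a reachability-style query over the regular structure $\ETLforest{\epsmodel_\init}{\eventmodel}$ and then leverage automata constructions to obtain both the decision procedure and the synthesis of $\wautoplans$. By Proposition~\ref{prop-merging}, the structure $\str=\ETLforest{\epsmodel_\init}{\eventmodel}$ generated by $\epsmodel_\init$ and the propositional event model $\eventmodel$ is regular, with a representation $(\wauto_\strdom,\{\transi'\}_{i\in\agents},\{\wauto_p\}_{p\in\AP})$ of size $2^{O(|\AP|)}\cdot(|\epsmodel_\init|+|\eventmodel|)^{O(1)}$. The key observation is that a solution plan $\eventa_1\ldots\eventa_n\in\subsetevents^*$ corresponds exactly to a node $\hist=\world_\init\eventa_1\ldots\eventa_n$ of $\str$ whose events all lie in $\subsetevents$ and at which the goal formula $\phi$ holds. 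Thus I would first restrict the domain automaton $\wauto_\strdom$ to only read events from $\subsetevents$ on event-transitions, which is a trivial edge-deletion and costs nothing asymptotically.

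Next I would evaluate the goal formula $\phi\in\langEL$ over the regular structure. Since $\phi$ is an epistemic formula (a state formula of $\CTLsKn$ with no temporal operators), the plan is to compute, for each subformula $\psi$ of $\phi$, a deterministic word automaton $\wauto_\psi$ recognizing exactly the set of nodes $\hist\in\strdom$ at which $\psi$ holds. This proceeds by induction on the structure of $\phi$. Atoms $p$ use $\wauto_p$; boolean cases use the standard product/complementation constructions on deterministic automata. The interesting case is $\K_i\psi$: a node $\hist$ satisfies $\K_i\psi$ iff every $i$-successor $\hist'$ (i.e.\ every $\hist'$ with $\hist\,\strrel_i\,\hist'$) satisfies $\psi$. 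Having an automaton $\wauto_\psi$ for $\psi$ and the transducer $\transi'$ for $\strrel_i$, one obtains the set of nodes with an $i$-successor \emph{violating} $\psi$ by projecting the relation $[\transi']$ composed with the language $\lang(\wauto_{\neg\psi})$ onto the first tape; complementing within $\strdom$ then yields $\wauto_{\K_i\psi}$. The crucial point for the complexity bound is the cost of this step: the projection introduces nondeterminism, and the subsequent complementation/determinization incurs one exponential blow-up per nesting level of knowledge operators. This is exactly why formulas of nesting depth $k$ land in $\kEXPTIME[k+1]$: the base regular representation already costs one exponential in $|\AP|$, and each of the $k$ nested $\K_i$ modalities contributes one further exponential through determinization.

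Once $\wauto_\phi$ is computed, the decision and synthesis both follow immediately. The automaton $\wautoplans$ is obtained by intersecting the restricted domain automaton with $\wauto_\phi$ (so that only nodes satisfying the goal are accepted) and then, since we want the \emph{plan} $\eventa_1\ldots\eventa_n$ rather than the full history $\world_\init\eventa_1\ldots\eventa_n$, relabelling/projecting the accepted words to erase the leading world symbol; this is a simple homomorphism on the transition labels. By construction $\lang(\wautoplans)$ is exactly the set of solution plans, and the planning instance is a positive instance iff $\lang(\wautoplans)\neq\emptyset$, which is decidable by a linear-time emptiness test on $\wautoplans$. The whole construction runs within the stated $\kEXPTIME[k+1]$ bound, and $\wautoplans$ is produced as a by-product in the same time, establishing both halves of the theorem.

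I expect the main obstacle to be controlling the complexity of the $\K_i$ case precisely enough to certify the $\kEXPTIME[k+1]$ bound rather than a looser tower. One must argue carefully that the exponential blow-up is incurred \emph{only once per nesting level}: the projection onto the first tape of a transducer applied to a deterministic automaton yields a nondeterministic automaton whose determinization is a single exponential, and nesting $k$ such operations composes to a $k$-fold exponential atop the $2^{O(|\AP|)}$ base. Getting the bookkeeping of state blow-ups right across the induction, and verifying that the universe semantics of $\CTLsKn$ (where each $\universe_\noeuda$ is the tree containing $\noeuda$) coincides with the flat node-word-based evaluation used here, are the two points deserving the most care.
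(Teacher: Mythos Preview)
Your proposal is correct and yields the stated bounds, but the route differs from the paper's. The paper does not compute, subformula by subformula, a family of word automata $\wauto_\psi$ recognising the satisfying nodes; instead it invokes the \emph{powerset construction} developed for uniform strategies in \cite{maubertIGTR2013}. Concretely, from the regular representation of $\ETLforest{\epsmodel}{\eventmodel}$ it builds, in $k$ iterations, a single enlarged automaton $\pow{\wauto}$ in which the goal formula $\phi$ becomes evaluable \emph{positionally}: each state of $\pow{\wauto}$ carries enough information about $i$-related nodes (and, after iteration, about nodes related through chains of length up to $k$) that the truth of every subformula of $\phi$ can be read off the state. One then keeps only the $\subsetevents$-transitions and declares accepting the states satisfying $\phi$ to obtain $\wautoplans$.

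Both arguments rest on the same core phenomenon---each nesting level of $\K_i$ contributes exactly one exponential, on top of the $2^{O(|\AP|)}$ base from Proposition~\ref{prop-merging}---but they package it differently. Your approach is the classical model-checking style for modal logics over automatic/regular structures: transducer-image plus projection (one existential step, introducing nondeterminism) followed by determinisation/complementation. It is self-contained and makes the complexity accounting transparent. The paper's approach instead reuses off-the-shelf machinery from the uniform-strategy setting, which has the advantage of unifying the argument with the more general protocol-synthesis result of Theorem~\ref{theo-proto-synth} and of directly producing a single automaton in which the final restriction to $\subsetevents$ and the choice of accepting states are trivial. Your closing caveat about matching the flat node-word semantics with the $\langEL$ semantics on the iterated product is well taken; this is indeed the point where one must check that $\strrel_i$ on $\ETLforest{\epsmodel}{\eventmodel}$ coincides with the accessibility relation in each finite product $\epsmodel\otimes\eventmodel^{\otimes n}$.
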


\begin{proofsketch}
 Let $(\epsmodel,\eventmodel,\setevents,\phi)$ be an instance of the
 problem.  By Proposition~\ref{prop-merging} we obtain an exponential
 size automatic
representation of the forest $\ETLforest{\epsmodel}{\eventmodel}$: the
set of possible histories, as well as their valuations, are represented by a finite automaton $\wauto$, and
the epistemic relations are given by finite state transducers. 
Because the
epistemic relations are rational, we can use the powerset
construction presented in \cite{maubertIGTR2013} in the context of
uniform strategies 
\cite{maubertIGTR2013,maubertSR2013,maubertFSTTCS2013court}. 
Indeed, this construction easily generalizes  to the case of $n$ relations, and even
though in \cite{maubertIGTR2013} it is defined on game arenas it can,
in our context,  be adapted
to regular structures.
Letting $k$ be the maximal nesting depth of knowledge
operators in $\phi$, this construction yields an automaton $\pow{\wauto}$ of size
$k$-exponential in the size of $\wauto$, hence $(k+1)$-exponential in
$|\epsmodel,\eventmodel,\setevents,\phi|$, that still represents
$\ETLforest{\epsmodel}{\eventmodel}$, and in which $\phi$ can be evaluated
positionally. Keeping only  transitions
 \labeled by events in $\subsetevents$, and choosing for accepting states
those that verify $\phi$, we obtain the automaton $\wautoplans$ that
recognizes the set of solution plans. Furthermore, solving the epistemic
planning problem amounts to solving the nonemptiness problem for
$\lang(\wautoplans)$; this can be done in time linear in the size of
$\wautoplans$, which is $k+1$-exponential in the size of the input $(\epsmodel,\eventmodel,\subsetevents,\phi)$.  
\end{proofsketch}

In fact, the correspondence between the DEL framework and automatic
structures established in Proposition~\ref{prop-merging} allows us to
solve a much more general problem than  epistemic planning. 

We generalize the notion of epistemic planning in three
directions. First, we no longer consider finite sequences of actions
but infinite ones. As a consequence, we need not stick to reachability
objectives as in planning (where the aim is to reach a state of the
world that verifies some formula), and we therefore allow for any epistemic
temporal formula as objective, which is the second
generalization. Finally, we no longer look for a single series of
events, but we try to synthesize a \emph{protocol}, \ie a set of
plans.

\begin{definition}
\label{def-protocol}
Given an epistemic model $\epsmodel$ and an event model $\eventmodel$, an
\emph{epistemic protocol} is a forest
$\proto\subseteq\ETLforest{\epsmodel}{\eventmodel}$; it is  \emph{rooted} if it is a
tree.
\end{definition}

\begin{definition}[Epistemic protocol synthesis problem]
\label{def-proto-synth}
  Given an initial pointed epistemic model $(\epsmodel,\world)$, a propositional event model $\eventmodel$
  and a $\CTLsKn$ formula $\phi$, letting
  $\universe=\ETLforest{\epsmodel}{\eventmodel}$ be the universe, decide if there is an
  epistemic protocol $\proto\subseteq\universe$
  rooted in $\world$ such that $\proto\models\phi$, and synthesize such a
  protocol if any.
\end{definition}

Again making use of Proposition~\ref{prop-merging}, the epistemic
protocol synthesis  problem can be reduced to synthesizing a
uniform strategy in a game arena with regular relations between
plays. This  can be solved with the
powerset construction from \cite{maubertIGTR2013} and classic automata techniques for solving
games with $\CTLs$ winning condition. We finally obtain the following
result. 

\begin{theorem}
\label{theo-proto-synth}
The epistemic protocol synthesis  problem is decidable.
If  the
 nesting depth of the  goal formulas is bounded by $k$, then the
problem is in \kEXPTIME[\max(2,k+1)]. 
\end{theorem}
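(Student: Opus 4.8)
The plan is to reduce the epistemic protocol synthesis problem to the synthesis of a uniform strategy in a suitable game arena, and then invoke the known machinery for solving such games. The key enabling fact is Proposition~\ref{prop-merging}: given the initial pointed epistemic model $(\epsmodel,\world)$ and the propositional event model $\eventmodel$, the universe $\universe=\ETLforest{\epsmodel}{\eventmodel}$ is a regular structure admitting an automatic representation $(\wauto_\strdom,\{\trans_i\}_{i\in\agents},\{\wauto_p\}_{p\in\AP})$ of exponential size. I would first explain how to view this representation as a game arena: the histories in $\strdom$ (accepted by $\wauto_\strdom$) are the plays, the transitions of $\wauto_\strdom$ \labeled by events give the moves, and the transducers $\trans_i$ provide the regular indistinguishability relations $\reli$ between plays needed to interpret the knowledge operators $\Ki$. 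Choosing a protocol $\proto\subseteq\universe$ rooted in $\world$ amounts precisely to choosing, at each history, a subset of available events; this is exactly a strategy in the arena, and the requirement $\proto\models\phi$ for a $\CTLsKn$ formula $\phi$ is a \emph{uniform} $\CTLs$ winning condition in the sense of \cite{maubertIGTR2013,maubertSR2013,maubertFSTTCS2013court}, where uniformity encodes the epistemic modalities.

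Second, I would apply the powerset construction of \cite{maubertIGTR2013} to handle the knowledge operators. Because each $\reli$ is a regular relation recognized by a transducer, this construction can be carried out: for a formula of knowledge-nesting depth $k$, it produces a new automaton $\pow{\wauto}$ of size $k$-exponential in $|\wauto|$ in which every subformula $\Ki\psi$ can be evaluated positionally, i.e. the knowledge modalities have been ``compiled away'' into the (enlarged) state space. Since $|\wauto|$ is already exponential in the input size by Proposition~\ref{prop-merging}, $\pow{\wauto}$ is $(k+1)$-exponential in $|\epsmodel,\eventmodel,\phi|$. Exactly as in the sketch of Theorem~\ref{theo-synth-epistemic-planning}, this construction adapts from game arenas to the regular structures of our setting, and I would note that this adaptation is the same one already used there, so it need not be redone from scratch.

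Third, once the knowledge operators are internalised, the objective $\phi$ becomes a plain $\CTLs$ formula over the expanded arena $\pow{\wauto}$, and the problem reduces to synthesising a strategy (equivalently, pruning $\pow{\wauto}$ to a subforest) satisfying a $\CTLs$ specification. This is a classic $\CTLs$ synthesis/game-solving problem, solvable by standard automata-theoretic techniques: translate $\phi$ into a tree automaton (a doubly-exponential blow-up in $|\phi|$ in the worst case) and test the resulting tree automaton for nonemptiness, extracting a witness protocol from a nonemptiness certificate. The double-exponential cost of the $\CTLs$-to-tree-automaton step is what accounts for the $\max(2,k+1)$ in the bound: when $k+1\le 2$ the $\CTLs$ translation dominates and gives a $2$-fold exponential, whereas for larger $k$ the $(k+1)$-exponential powerset construction dominates. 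Combining the two sources of blow-up yields overall membership in $\kEXPTIME[\max(2,k+1)]$, and the nonemptiness witness furnishes the synthesised protocol, establishing decidability as well.

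The main obstacle I anticipate is the careful bookkeeping in the second and third steps: one must verify that the powerset construction of \cite{maubertIGTR2013}, which was stated for game arenas with a single or finitely many players, genuinely transfers to our infinite regular structures with $n$ agents and regular indistinguishability relations, and that the positional evaluation of $\Ki$ after the construction is sound with respect to the $\CTLsKn$ semantics given above (in particular that the ``biggest tree $\universe_\noeuda$ containing $\noeuda$'' clause is respected when we restrict to a protocol subforest). Establishing correctness of this interface between the uniform-strategy framework and the DEL-generated universe—rather than the complexity arithmetic, which is routine once the two blow-ups are identified—is where the real work lies.
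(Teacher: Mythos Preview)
Your proposal is correct and follows essentially the same approach as the paper: use Proposition~\ref{prop-merging} to obtain a regular representation of the universe, reduce epistemic protocol synthesis to uniform strategy synthesis in a game arena with regular indistinguishability relations, apply the powerset construction of \cite{maubertIGTR2013} to eliminate the knowledge modalities, and finish with classic automata techniques for $\CTLs$ games. Your complexity accounting for the $\max(2,k+1)$ bound---balancing the $(k{+}1)$-fold exponential from the iterated powerset construction against the doubly-exponential cost of $\CTLs$ synthesis---is exactly the intended justification, and in fact you spell out more detail than the paper itself provides.
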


\section{Discussion}

We have described a connection between DEL-generated models and
regular structures, which enabled us to resort to a combination of
mature automata techniques and more recent ones developed for the
study of uniform strategies, in order to solve planning problems in
the framework of DEL. We believe that this is but a first step in
applying classic automata techniques developed for temporal logics to
the study of dynamic epistemic logic.  As witnessed by classic works on automata-based program
synthesis (see for example \cite{pnueli89b,DBLP:conf/stacs/Thomas95}),
 automata techniques are well suited to tackle problems such as
synthesizing plans, protocols, strategies or programs,
and we  believe
that it should also be the case in the DEL framework;  in
addition the complexity of solving classic automata problems such as
nonemptiness has been extensively studied, and this may help to settle
the complexity of problems in DEL, such as the epistemic
planning problem.

As for future work, we would like to investigate the optimality of the
upper-bounds that we obtained on the time complexity of the
epistemic planning problem for propositional event models, as well as
for our notion of epistemic protocol synthesis.  
Another direction for future research concerns the latter problem: a next step would be to apply
techniques from control theory and quantified $\mu$-calculus \cite{DBLP:conf/mfcs/RiedwegP03} to synthesize \emph{maximal permissive}
epistemic protocols. In general such objects only exist
for safety objectives, but recently a weaker notion of
\emph{permissive strategy} has
been studied in the context of parity games
\cite{DBLP:journals/ita/BernetJW02}. A  strategy is permissive if it
contains the behaviours of all memoryless strategies, and such
strategies always exist in parity games.  Similar notions may be
introduced for protocols with epistemic temporal objectives to capture 
concepts of ``sufficiently permissive'' protocols.



\bibliographystyle{eptcs}
\bibliography{../../Biblio/games,../../Biblio/logic,../../Biblio/opacity,../../Biblio/misc,../../Biblio/classic,../../Biblio/automata,../../Biblio/transducers}



\end{document}